\newcommand\rmd{\mathrm{d}}
\newtheorem{theorem}{Theorem}
\newtheorem{corollary}{Corollary}
\title{Integrability of certain Hamiltonian systems in $2D$ variable curvature spaces}
\shorttitle{Integrability of certain Hamiltonian systems in $2D$ variable curvature spaces} 
\author{W. Szumi\'{n}ski\inst{1} \and A.~A. Elmandouh\inst{2,3}}
\institute{                    
  \inst{1} Institute of Physics, University of Zielona G\'{o}ra Licealna 9, PL-65-407, Zielona G\'{o}ra, Poland \\
  \inst{2} Department of Mathematics and Statistics, College of Science, King Faisal University, P.O. Box 400, Al-Ahsa 31982, Saudi Arabia\\    \inst{3}  Department of Mathematics, Faculty of Science, Mansoura University, Mansoura 35516, Egypt}
\abstract{
The objective of this work is to examine the integrability of Hamiltonian systems in $2D$ spaces with variable curvature of certain types.  Based on the differential Galois theory, we announce the necessary conditions of the integrability.  They are given in terms of arithmetic restrictions on values of the parameters describing the system.  We apply the obtained results to some examples to illustrate that the applicability of the obtained result is easy and effective.   Certain new integrable examples are given.  The findings highlight the applicability of the differential Galois approach in studying the integrability of Hamiltonian systems in curved spaces, expanding our understanding of nonlinear dynamics and its potential applications. \vskip 10pt
\textbf{Declaration}:  The article has been published in~\cite{Szuminski_Elmandouh_2025}, and the final version is available at: \textbf{\href{ https://doi.org/10.1209/0295-5075/adc150}{ https://doi.org/10.1209/0295-5075/adc150} }
}
\begin{document}

\maketitle

\section{Introduction}
The study of the integrability of Hamiltonian systems is one of the central problems in mathematical physics. An $n$-dimensional Hamiltonian system is said to be integrable in the Liouville sense if it possesses $n$ functionally independent constants of motion that are in involution, i.e., their Poisson brackets vanish~\cite{A1}. Integrable Hamiltonian systems exhibit several advantages: their behavior can be analyzed globally over an infinite interval of time, offering profound insights into the long-term evolution of the system. This global analysis is particularly valuable as it enables the identification of invariant tori and other persistent structures in phase space.

Furthermore, integrable systems serve as a foundation for studying non-integrable Hamiltonian systems through perturbation theories. By examining deviations from integrable behavior, one can gain valuable insights into stability and the onset of chaos in non-integrable systems. Another significant property of integrable systems is that their explicit solutions can be obtained by quadratures, meaning that the generalized coordinates can be expressed as functions of time via integrals~\cite{A2}.

Integrable systems have numerous applications in physical sciences. In celestial mechanics, they play a crucial role in understanding the motion of planets, satellites, and other celestial bodies~\cite{A2b,A2c}. In quantum mechanics, integrable systems contribute to the solvability of certain quantum models~\cite{A2d,A2e}. Moreover, they are instrumental in uncovering the geometric structure of phase space and elucidating the underlying symmetries of physical laws. Their importance in theoretical physics lies in their ability to provide a deeper understanding of the mathematical and geometric foundations of dynamical systems~\cite{A2f,A2g}.

\par In physics, many mechanical systems can be described as two-dimensional systems. These systems are generally described by the following Lagrangian

\begin{equation}
L=\frac{1}{2}(a_{11}\dot{\xi}^2+2a_{12}\dot{\xi}\dot{\eta}+a_{22}\dot{\eta}^2)-V(\xi, \eta),\label{IN1}
\end{equation}
where $a_{ij}$ are functions relying on $\xi, \eta$ and $V(\xi,\eta)$ refers to the potential function. According to Birkhoff's theorem \cite{N1}, one can always utilize isometric coordinates (say) $q_1, q_2$ to transform the Lagrangian~(\ref{IN1}) to
\begin{equation}
L=\frac{1}{2\Lambda}(\dot{q}_1^2+\dot{q}_2^2)-V(q_1,q_2),\label{IN2}
\end{equation}
where $\Lambda$ is a function in $q_1,q_2$.
The Hamiltonian corresponding to the Lagrangian~(\ref{IN2}) takes the form
\begin{equation}
H=\frac{\Lambda}{2}(p_1^2+p_2^2)+V(q_1,q_2).\label{IN3}
\end{equation}
Hamiltonian~(\ref{IN3}) describes the motion of a particle under the influence of potential forces derived by the potential function, and this motion happens on a surface with Gaussian curvature, given by 
\begin{equation}
\kappa=\frac{\Lambda}{2}\left(\frac{\partial^2\Lambda}{\partial q_1^2}+\frac{\partial^2\Lambda}{\partial q_2^2}\right).\label{IN4}
\end{equation}
\par The majority of the integrability studies related to the Lagrangian~(\ref{IN2}), or equivalently to the Hamiltonian~(\ref{IN3}), are divided into two categories. The first one is finding sufficient conditions for integrability, that is, enumerating the required number of constants of motion. Many methods are used, such as the Darboux and Yehia methods, see,e.g., \cite{A3,A4,A5,A6,A7,A8,A11,A13,A14}. All these methods are interested in searching for the first integrals of motion when the configuration space is Euclidean. Most of this work was listed in the Hietarienta review up to 1987. Yehia's proposed method works whether the configuration space is Euclidean or not. Hence, it was applied to construct new integrable problems in a rigid body dynamic and its generalization to a gyrostat. The second one endeavors to get the necessary conditions for the integrability of a considered model.  These conditions can be obtained by applying numerous methods such as Painlevé property \cite{A17} or Ziglin approach \cite{A18}.
It was recently noticed the existence of a certain connection between Liouville integrable Hamiltonian systems and the properties of the differential Galois group $\mathcal{G}$ of the variational equation for these systems. Because $\mathcal{G}$ is an algebraic group, it was written as a union of a finite number of connected components in which one of them contains the identity, and it is called identity components of $\mathcal{G}$ and it is denoted by $\mathcal{G}_0$. The necessary conditions for the integrability were presented as a restriction on the identity component $\mathcal{G}_0$. The Morales-Ruiz and Ramis theorem states the following theorem.
\begin{theorem}[Morales-Ruiz and Ramis~\cite{N3}]
Let us assume a Hamiltonian system is Liouville integrable in a neighborhood of a phase curve $\Gamma$ corresponding to a certain particular solution. Then, the identity component $\mathcal{G}_{0}$ of the differential Galois group $\mathcal{G}$ associated with the variational equations along $\Gamma$ is abelian.
\end{theorem}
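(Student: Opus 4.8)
The plan is to deduce the abelianity of $\mathcal{G}_0$ from the existence of enough first integrals of the \emph{variational equations} (VE), obtained by linearizing the commuting first integrals of the Hamiltonian system along $\digamma$, and then to convert this into a purely Lie-algebraic statement about a subalgebra of the symplectic Lie algebra. First I would record the symplectic nature of the VE: because the flow is Hamiltonian, the fundamental solution of the VE along $\digamma$ is a symplectic matrix, so $\mathcal{G}$ is an algebraic subgroup of $\mathrm{Sp}(2n,\C)$ and its Lie algebra $\mathfrak{g}=\mathrm{Lie}(\mathcal{G}_0)$ is a subalgebra of $\mathfrak{sp}(2n,\C)$. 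I will use the classical identification of $\mathfrak{sp}(2n,\C)$ with the space of homogeneous quadratic forms on the solution space $V\cong\C^{2n}$ under the Poisson bracket, sending each $A\in\mathfrak{g}$ to its quadratic Hamiltonian $Q_A$; abelianity of $\mathfrak{g}$ is then equivalent to $\{Q_A,Q_B\}=0$ for all $A,B\in\mathfrak{g}$.

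Next, from Liouville integrability we are given $n$ meromorphic first integrals $F_1=H,\dots,F_n$, functionally independent and in involution. The central step is to transport these to the VE. Restricting to $\digamma$ and taking, for each $F_i$, the lowest-order nonvanishing term of its expansion transverse to $\digamma$ produces a rational first integral $\phi_i$ of the VE. Two facts must be established here: that the $\phi_i$ remain functionally independent, and that they inherit involutivity, $\{\phi_i,\phi_j\}=0$, because the Poisson bracket of leading terms is the leading term of $\{F_i,F_j\}=0$ with respect to the constant symplectic form of the VE. Moreover, since each $\phi_i$ is a rational first integral of a \emph{linear} system, it is invariant under the Galois action, i.e. $\{Q_A,\phi_i\}=0$ for every $A\in\mathfrak{g}$.

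I expect the functional independence in this linearization step to be the main obstacle, since the differentials $dF_i$ can degenerate along the particular solution $\digamma$; controlling this requires a genericity argument for the base point on $\digamma$ together with careful order/filtration bookkeeping to select the correct leading terms. Once this is in place, the remainder is a Lagrangian-foliation argument. The $\phi_1,\dots,\phi_n$ are $n$ independent functions in involution, so at a generic point of $V$ their differentials span a Lagrangian subspace, which equals its own symplectic orthogonal. Any quadratic form $Q_A$ Poisson-commuting with all the $\phi_i$ then has differential lying in that Lagrangian, which forces $Q_A$ to be functionally dependent on $\phi_1,\dots,\phi_n$.

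Since functions of an involutive family are themselves mutually in involution, we conclude $\{Q_A,Q_B\}=0$ for all $A,B\in\mathfrak{g}$, that is $[A,B]=0$, so $\mathfrak{g}$ is abelian. As $\mathcal{G}_0$ is connected with Lie algebra $\mathfrak{g}$, the identity component $\mathcal{G}_0$ is abelian, which is the desired conclusion. A useful simplification, worth carrying out in parallel, is to first quotient by the direction of the flow (and the level set of $H$) to pass to the \emph{normal} variational equation, whose Galois group is $\mathcal{G}_0$ up to a central factor; this removes the trivial first integral coming from $H$ itself and makes the independence count cleaner.
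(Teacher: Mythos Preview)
The paper does not prove this theorem. It is quoted from Morales--Ruiz's monograph~\cite{A19} (see also~\cite{N3,N4}) and used as a ready-made tool; the authors proceed directly from its statement to the computation of variational equations. There is thus no ``paper's own proof'' to compare your proposal against.

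For what it is worth, your outline is the standard route taken in the original Morales--Ramis argument: pass the involutive meromorphic first integrals $F_1,\dots,F_n$ to first integrals of the variational equation via their lowest-order homogeneous parts (Ziglin's lemma), observe that these are invariant under the Galois action, and then use the symplectic/Lagrangian structure to force the Lie algebra of $\mathcal G_0$ to be abelian. You have correctly flagged the genuine difficulty, namely preserving functional independence when taking leading terms along $\digamma$; this is exactly what Ziglin's lemma handles, and it does require a genericity and filtration argument of the kind you describe. Two small refinements: the $\phi_i$ one obtains are homogeneous polynomials in the fibre variables with coefficients meromorphic on the base curve (not merely rational), and invariance is needed only under $\mathcal G_0$, which suffices since rational/meromorphic first integrals of a linear system are fixed by the full Galois group and hence a fortiori by its identity component.
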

The integrability study of certain forms of the Hamiltonian~(\ref{IN3}) is based on applying the differential Galois theory. The necessary conditions are obtained as those that guarantee the solvability of the differential Galois group associated with the normal variational equation along a particular solution. We classify and present the previous works according to the Gaussian curvature.

\begin{enumerate}
\item
\emph{Hamiltonian systems with zero Gaussian curvature}. In physics, these systems describe the motion of a particle in the flat Euclidean plane. The integrability of the  Hamiltonian
\begin{equation}
H=\frac{1}{2}(p_1^2+p_2^2)+V(q_1, q_2),\label{IN5}
\end{equation}
with homogeneous potential of degree $k\neq0$ was studied by Morales-Ruiz and Ramis in \cite{N4}. This study was completed by Casale et al.\cite{N5} when $k=0$. For the non-homogeneous potentials that can be decomposed in a unique way into homogeneous parts, the integrability of this potential implies the integrability of potentials of the lowest and highest homogeneity degrees, but this is usually not sufficient for integrability, and sometimes one can find additional obstructions  \cite{RE1}.

Szumin\'ski et al. in~\cite{N6} examine the integrability of certain classes of Hamiltonian systems in curved spaces. This Hamiltonian takes the form
\begin{equation}
H=\frac{1}{2}r^{m-k}\left(p_r^2+\frac{p_\theta^2}{r^2}\right)+r^mU(\theta),\label{IN6}
\end{equation}
where $m, k$ are integers and $U$ is a complex meromorphic function.  Hamiltonian~(\ref{IN6}) characterizes the motion of a particle in the flat Euclidean plane. In~\cite{N7}, Elmandouh examine the integrability of a Hamiltonian in the form
\begin{equation}
H=\frac{1}{2}\left(p_r^2+\frac{p_\theta^2}{r^2}\right)+r^m+r^{m+k}U(\theta),\label{IN7}
\end{equation}
where $m, k$ are integers whereas $U$ is a complex meromorphic function.
\item  \emph{Hamiltonian system with constant Gaussian curvature}. These systems are interpreted physically as the motion either on a standard sphere or pseudo-sphere, according to the Gaussian curvature, is either positive or negative constant.   In \cite{N8}, Maciejewski et al. examined the integrability of the Hamiltonian taking the form
\begin{equation}
H=\frac{1}{2}\left(p_r^2+\frac{p_\theta^2}{S_\kappa^2(r)}\right)+S_\kappa^k(r)U(\theta),\label{IN8}
\end{equation}
where $U$ is a meromorphic function and $S_\kappa(r)$ is defined as
\begin{eqnarray}
S_{\kappa}(r):=\left\{\begin{array}{cc}
                         \frac{1}{\sqrt{\kappa}}\sin\sqrt{\kappa}r &  \qquad \text{if}\quad\kappa>0 \\
                         r & \qquad \text{if} \quad \kappa=0 \\
                         \frac{1}{\sqrt{-\kappa}}\sinh\sqrt{-\kappa}r& \qquad \text{if}\quad\kappa<0
                       \end{array}\right.
\end{eqnarray}
\item  \emph{Hamiltonian systems with variable Gaussian curvature}. Such Hamiltonian characterizes the motion of a particle on a surface with variable Gaussian curvature. Elmandouh in~\cite{Elmandouh:18::}, and Szumiński in~\cite{Szuminski:21::} investigated the integrability of the Hamiltonian
\begin{equation}
H=\frac{1}{2}r^n\Lambda(\theta)\left(p_r^2+\frac{p_\theta^2}{r^2}\right)+r^m U(\theta),\label{IN9}
\end{equation}
where $n, m$ are integers and $\Lambda(\theta), U$ are meromorphic functions.

\end{enumerate}
The current work is interested in studying the integrability of Hamiltonians, which are generalizations of these belonging to the third type. Namely, we consider a Hamiltonian of the following general form
\begin{equation}
H=\frac{1}{2}(r^n\Lambda_{2}(\theta)+\Lambda_1(\theta))\left(p_r^2+\frac{p_\theta^2}{r^2}\right)+r^mU(\theta),\label{Eq1}
\end{equation}
where $m, n$ are integers and $\Lambda_1,\Lambda_{2}, U$ are  meromorphic functions. In physics, Hamiltonian~\eqref{Eq1} describes the motion of a particle due to the effect of potential forces derived from the potential $r^mU$, and this motion takes plane on a surface with Gaussian curvature in the form
\begin{equation}
\kappa=\frac{1}{2r^2}\left[r^n\Lambda_{2}^{\prime\prime}+\Lambda_{1}^{\prime\prime}\right]+\frac{n^2r^n\Lambda_{1}\Lambda_{2}
-[r^n\Lambda_{2}^\prime+\Lambda_{1}^\prime]^2}{2r^2[r^n\Lambda_{2}+\Lambda_{1}]},\label{Gauss}
\end{equation}
where dash indicates the derivative concerning $\theta$. Notice the Hamiltonian \eqref{Eq1} represents a general class of Hamiltonian systems introduced above. For instance, it turns into the Hamiltonian \eqref{IN9} if $\Lambda_{1}=0$ and it is transformed to the Hamiltonian \eqref{IN6} if $\Lambda_{2}=1, \Lambda_{1}=0$.
\section{Variational equations and the integrability}
The Hamilton equations corresponding to the Hamiltonian~\eqref{Eq1} are as follows
\begin{equation}
\begin{split}
\dot{r}&=[r^n\Lambda_{2}+\Lambda_{1}]p_r,\\
\dot{p}_r&=-\frac{nr^{n-1}}{2}\Lambda_{2}\left(p_r^2+\frac{p_\theta^2}{r^2}\right)+[r^n\Lambda_{2}+\Lambda_{1}]\frac{p_\theta^2}{r^3}-mr^{m-1}U,\\
\dot{\theta}&=\frac{1}{r^2}[r^n\Lambda_{2}+\Lambda_{1}]p_\theta,\\
\dot{p}_\theta&=-\frac{1}{2}[r^n\Lambda_{2}^\prime +\Lambda_{1}^\prime ]\left(p_r^2+\frac{p_\theta^2}{r^2}\right)+r^mU^\prime .\label{Eq2}
\end{split}
\end{equation}
The Hamiltonian Eqs. \eqref{Eq2} admit the energy integral
\begin{equation}
\frac{1}{2}(r^n\Lambda_{2}+\Lambda_{1})\left(p_r^2+\frac{p_\theta^2}{r^2}\right)+r^mU=h,\label{Eq3}
\end{equation}
where $h$ is the energy constant. If $\Lambda_{2}^{\prime}(c)=\Lambda_{1}^{\prime}(c)=U^\prime(c)=0$ for a specific $c\in\mathbb{C}$, then the Hamilton Eqs.~\eqref{Eq2} possess a $2D$ invariant manifold in the form
\begin{equation}
\mathcal{S}=\left\{(r, p_r, \theta, p_\theta)\in\mathbb{C}^4 \ \big{|}\ \theta=c, \, p_\theta=0\right\}.\label{Eq4}
\end{equation}
The phase curves parameterized by the energy $h$ foliate
the invariant manifold $\mathcal{S}$ and for $(r, p_r, \theta, p_\theta)\in\mathcal{S}$, the energy integral \eqref{Eq3} takes the form
\begin{equation}
\frac{1}{2}[r^n\Lambda_{2c}+\Lambda_{1c}]p_{r}^{2}+r^mU_c=h,\label{Eq5}
\end{equation}
where for the short notation, we defined
\begin{equation}
\label{eq:dd}
\Lambda_{1c}:=\Lambda_1(c),\quad \Lambda_{2c}:=\Lambda_2(c),\quad U_c:=U(c).
\end{equation}
Let $\mathbf{X}=[R, P_R, \Theta, P_\Theta]^T$ indicate the variations of the variables $\mathbf{x}=[r, p_r, \theta, p_\theta]^T$, then the first-order variational equations along the particular solution that lies on $\mathcal{S}$ admit the form
\begin{equation}
\label{Eq6}
\mathbf{\dot  X}=\mathbf{M}\cdot\mathbf{X}.
\end{equation}
Here $\mathbf{M}$ is the Jacobian matrix
\begin{eqnarray}
\left(
                \begin{array}{cccc}
                  nr^{n-1}\Lambda_{2c}p_r & r^n\Lambda_{2c}+\Lambda_{1c} & 0 & 0 \\
                 A  & 0 & 0 & 0 \\
                  0 & 0 & 0 & \frac{r^n\Lambda_{2c}+\Lambda_{1c}}{r^2} \\
                  0 & 0 & B  & 0 \\
                \end{array}
              \right),           \label{Eq66}
\end{eqnarray} and $A, B$ are given by
\begin{equation}
\begin{split}
&A(r,p_r)=\frac{n}{2}(1-n)r^{n-2}\Lambda_{2c}p_r^2+m(1-m)r^{m-2}U_c,\\
&B(r,p_r)=-\frac{1}{2}\left(\Lambda_{1c}''+r^n \Lambda_{2c}''\right)p_r^2-r^m U_c''.\label{Eq7}
\end{split}
\end{equation}
Since the particular solution corresponds to the motion  on $(r, p_r)$ plane,
the normal variational equations are provided by the following  closed subsystem
\begin{eqnarray}
\dot{\Theta}=\frac{r^n\Lambda_{2c}+\Lambda_{1c}}{r^2}P_\Theta,\qquad
\dot{P}_\Theta=B\,\Theta.\label{Eq8}
\end{eqnarray}
System \eqref{Eq8} can be rewritten as a single second-order differential equation in the form
\begin{eqnarray}
\ddot \Theta+a(r,p_r) \,\dot \Theta+b(r,p_r)\,\Theta=0,
\label{Eq9}
\end{eqnarray}
where
\begin{equation*}
\begin{split}
 a(r,p_r)&=\frac{\Lambda_{1c}+(2-n)r^n\Lambda_{2c}}{r}p_r,\\
 b(r,p_r)& =
\frac{[\Lambda_{1c}+r^n\Lambda_{2c}][(\Lambda_{1c}''+r^n \Lambda_{2c}'')p_r^2+2r^m U_c'']}{2r^2}.
 \end{split}
\end{equation*}
To simplify our further analysis, we perform the change of the independent variable
$
t\rightarrow \tau=r(t)^n.
$
This change of  variable, together with the transformation of the derivatives
\begin{equation}
    \frac{\mathrm{d}}{\mathrm{d}t}=\dot \tau\frac{\mathrm{d}}{\mathrm d \tau},\qquad \frac{\mathrm{d}^2}{\mathrm{d}t^2}=\left(\dot \tau^2\right)\frac{\mathrm{d}^2}{\mathrm{d}\tau^2}+\ddot \tau\frac{\mathrm{d}}{\mathrm{d}\tau},
\end{equation}
convert Eq.~\eqref{Eq9} into a linear one
\begin{equation}
\label{eq:rationalized}
\frac{\rmd^2}{\rmd \tau^2}\Theta+p(\tau)\frac{\rmd}{\rmd \tau}\Theta+q(\tau)\Theta=0,
\end{equation}
with the coefficients
\begin{equation*}
\begin{split}
   p(\tau)&= \frac{1}{2n}\left(\frac{2+m+2n}{\tau}-\frac{n\Lambda_{2c}}{\Lambda_{1c}+\Lambda_{2c}\tau}+\frac{hm}{( U_c\tau^\frac{m}{n}-h)\tau}\right),\\
   q(\tau)&=\frac{1}{2n^2\tau^2}\left(\frac{\Lambda_{1c}\Lambda_{2c}''-\Lambda_{2c}\Lambda_{1c}''}{\Lambda_{1c}(\Lambda_{1c}+\Lambda_{2c}\tau)}\tau+\frac{\Lambda_{1c}''}{\Lambda_{1c}}+\frac{U_c''\tau^\frac{m}{n}}{h-U_c\tau^{\frac{m}{n}}}\right).
   \end{split}
\end{equation*}
As we can notice, one of the singular points of Eq.~\eqref{eq:rationalized} depends on the choice of the energy $h$. However, to effectively analyze the differential Galois group of this equation with arbitrary values of the parameters $m,n$, we chose the zero level of the energy $h=0$.

By making the additional rescalation $z=-(A_{2c}/A_{1c})\tau$, we transform Eq.~\eqref{eq:rationalized} into the rational form
\begin{equation}
\begin{split}
&\frac{\rmd^2\Theta }{\rmd z^2}=-\left(\frac{1}{2(1-z)}+\frac{2n+m+2}{2nz}\right)\frac{\rmd \Theta}{\rmd z}-\\ &\left(\frac{4n(\chi_1^2-\chi_2^2)-2m+n-4}{16nz(z-1)}
+\frac{(m+2)^2-4n^2\chi_1^2}{16n^2z^2}\right)\Theta,\label{NVE}
\end{split}
\end{equation}
where 
$\chi_1$ and $\chi_2$ are the new parameters defined by
\begin{eqnarray*}
\begin{split}
\chi_1&=\frac{1}{2n}\sqrt{(m+2)^2+8\left(\frac{U_c''}{U_c}-\frac{\Lambda_{1c}''}{\Lambda_{1c}}\right)},\\
\chi_2&=\frac{1}{2n}\sqrt{(m-n+2)^2+8\left(\frac{U_c''}{U_c}-\frac{\Lambda_{2c}''}{\Lambda_{2c}}\right)}.\label{IC}
\end{split}
\end{eqnarray*}

We immediately recognize that Eq.~\eqref{NVE}  is the Riemann $P$-equation
\begin{equation}
\begin{split}
    \label{eq:Riemann_P_equation}
&\frac{\rmd^2 \Theta}{\rmd z^2}+\left(\frac{1-\gamma-\acute{\gamma}}{z-1}+\frac{1-\alpha-\acute{\alpha}}{z}\right)\frac{\rmd \Theta}{\rmd z}\\ &+\left(\frac{\gamma\acute{\gamma}}{(z-1)^2} +\frac{\beta\acute{\beta}-\alpha\acute{\alpha}-\gamma\acute{\gamma}}{z(z-1)}+\frac{\alpha\acute{\alpha}}{z^2}\right)\Theta=0,
\end{split}
\end{equation}
where $(\gamma, \acute{\gamma}), (\alpha, \acute{\alpha})$ and $(\beta, \acute{\beta})$ refer the exponents at the singular points $z\in\{0, 1, \infty\}$ that must  meet the Fuchs relationship
$
\alpha+\acute{\alpha}+\beta+\acute{\beta}+\gamma+\acute{\gamma}=1.\label{A2}
$
For details, please consult~\cite{Ref5}. We need to calculate the difference of the exponents, and for Eq.~\eqref{NVE}, they take the values
\begin{equation*}
\rho=\alpha-\alpha^\prime=\chi_1,\quad \sigma=\gamma-\gamma^\prime=\frac{3}{2},\quad \tau=\beta-\beta^\prime=\chi_2.\label{Eq12}
\end{equation*}
According to Morales-Ramis theorem, the integrability of  Hamiltonian \eqref{Eq1} implies the identity component of normal variational equation~\eqref{Eq8} and thus also its rationalized form \eqref{NVE} is abelian. Consequently, the corresponding differential Galois group is, in particular, solvable. The necessary and sufficient conditions for the solvability of the identity component of the differential Galois group for the normal variational equation~\eqref{NVE}, that is, Riemann $P$-equation, are formulated as some restrictions. They are given by theorem due to Kimura~\cite{Ref6}.
\begin{theorem}[Kimura]
\label{TH_Kimura} The differential Galois group of the Riemann $P$ equation \eqref{eq:Riemann_P_equation} has a solvable identity component iff\newline
\textbf{A.} at least one of the four numbers $\rho+\sigma+\tau$, $-\rho+\sigma+\tau$, $\rho+\sigma-\tau$, $\rho-\sigma+\tau$ is an odd integer, or\newline
\textbf{B.} the numbers $\rho$ or $-\rho$ and $\sigma$ or $-\sigma$ and $\rho $ or $-\rho$ belongs (in an arbitrary order) to some appropriate fifteen families forming the so-called Schwarz's table, see Table \ref{tablel:Schwarz}.
\end{theorem}

\begin{table}[t]
\setlength{\tabcolsep}{8pt}
\renewcommand{\arraystretch}{1.}
\begin{center}
\begin{tabular}{ccccc}
\hline
1. & $1/2+r$ & $1/2+s$ &  $\mathbb{C}$ &  \\
2. & $1/2+r$ & $1/3+s$ & $1/3+p$ &  \\
3. & $2/3+r$ & $1/3+s$ & $1/3+p$ & $r+s+p$ even \\
4. & $1/2+r$ & $1/3+s$ & $1/4+p$ &  \\
5. & $2/3+r$ & $1/4+s$ & $1/4+p$ & $r+s+p$ even \\
6. & $1/2+r$ & $1/3+s$ & $1/5+p$ &  \\
7. & $2/5+r$ & $1/3+s$ & $1/4+p$ & $r+s+p$ even \\
8. & $2/3+r$ & $1/5+s$ & $1/5+p$ & $r+s+p$ even \\
9. & $1/2+r$ & $2/5+s$ & $1/5+p$ &  \\
10. & $3/5+r$ & $1/3+s$ & $1/5+p$ & $r+s+p$ even \\
11. & $2/5+r$ & $2/5+s$ & $2/5+p$ & $r+s+p$ even \\
12. & $2/3+r$ & $1/3+s$ & $1/5+p$ & $r+s+p$ even \\
13. & $4/5+r$ & $1/5+s$ & $2/5+p$ & $r+s+p$ even \\
14. & $1/2+r$ & $2/5+s$ & $1/3+p$ &  \\
15. & $3/5+r$ & $2/5+s$ & $1/3+p$ & $r+s+p$ even \\
\hline
\end{tabular}
\caption{The Schwarz table. Here $s, p, r \in \mathbb{Z}$.}
\label{tablel:Schwarz}
\end{center}
\end{table} 
\par Now we  apply Theorem~\ref{TH_Kimura} to obtain the necessary integrability  conditions for curved Hamiltonian~\eqref{Eq1}.
\begin{enumerate}
  \item Condition (A) in Theorem \ref{TH_Kimura} is satisfied if at least one of the following numbers
 \begin{equation}
 \begin{split}
  \nu_1&=\chi_1+\chi_2+\frac{3}{2},\qquad \nu_2=-\chi_1+\chi_2+\frac{3}{2},\notag\\
  \nu_3&=\chi_1-\chi_2+\frac{3}{2},\qquad \nu_4=\chi_1+\chi_2-\frac{3}{2},\label{Eq13}
  \end{split}
  \end{equation}
  is an odd integer. This is verified if $\chi_1$ is an arbitrary parameter and $\chi_2$ equals either $\pm\chi_1+2p-\frac{1}{2}$ or $\pm\chi_1+2p+\frac{1}{2}$, where $p\in\mathbb{Z}$.
  \item For Theorem \ref{TH_Kimura}, the only working cases in Table \ref{tablel:Schwarz} for condition (B) are 1, 2, 4, 6, 9, and 14 due to $\sigma=\frac{3}{2}$. We will examine each case individually.  
  
      \textbf{Case 1:} There are two subcases:
      \begin{itemize}
        \item The first subcase is the selection $\pm\rho=\frac{1}{2}+s, s\in \mathbb{Z}$ and $\tau$ is an arbitrary. Consequently, we have $\chi_1=\frac{1}{2}+s$ and $\chi_2$ is arbitrary as a result of $\tau$ is arbitrary.
        \item The second subcase is the choice $\rho$ is arbitrary and $\pm\tau=\frac{1}{2}+p, p\in \mathbb{Z}$. The first condition gives $\chi_1$ is arbitrary and second one implies $\chi_2=\frac{1}{2}+p$.
      \end{itemize}
      \textbf{Case 2:} There are two items:
      \begin{itemize}
        \item The first item is $\pm\tau=\frac{1}{3}+p$ and $\pm\rho=\frac{1}{3}+s$, where $p, s\in\mathbb{Z}$. The first condition gives $\chi_2=\pm\frac{1}{3}+p$ while the second one implies $\chi_1=\pm \frac{1}{3}+s$.
        \item The second item is gives the same conditions as the first item.
      \end{itemize}
      \textbf{Case 4:} There are two choices:
      \begin{itemize}
        \item The first choice is $\pm\rho=\frac{1}{3}+s$ and $\pm\tau=\frac{1}{4}+p$, for some $s, p\in\mathbb{Z}$. The first condition implies $\chi_1=\pm\frac{1}{3}+s$ whereas the second one is satisfied if $\chi_2=\pm\frac{1}{4}+p$.
        \item The second choice is $\pm\rho=\frac{1}{4}+s$ and $\pm\tau=\frac{1}{3}+p$, for some $s, p\in\mathbb{Z}$. These two conditions imply $\chi_1=\pm\frac{1}{4}+ s$ and $\chi_2=\pm\frac{1}{3}+ p$.
      \end{itemize}
      \textbf{Case 6:} There are two selections:
      \begin{itemize}
        \item The first selection is $\pm\rho=\frac{1}{3}+s$ and $\pm\tau=\frac{1}{5}+p$, for some $s, p\in\mathbb{Z}$. These conditions are verified if $\chi_1=\pm\frac{1}{3}+s$ and $\chi_2=\pm\frac{1}{5}+ p$.
        \item The second selection is $\pm\rho=\frac{1}{5}+s$ and $\pm\tau=\frac{1}{3}+p$, for some $s, p\in\mathbb{Z}$. These conditions are satisfied if $\chi_1=\pm\frac{1}{5}+s$ and $\chi_2=\pm\frac{1}{3}+ p$.
      \end{itemize}
     \textbf{Case 9:} There are two sub-cases:
     \begin{itemize}
       \item The first subcase is $\pm\rho=\frac{2}{5}+s$ and $\pm\tau=\frac{1}{5}+p$, for some $s, p\in\mathbb{Z}$. The first condition is satisfied if $\chi_1=\pm\frac{2}{5}+s$ while the second one is verified if $\chi_2=\pm\frac{1}{5}+ p$.
       \item The second subcase is $\pm\rho=\frac{1}{5}+s$ and $\pm\tau=\frac{2}{5}+p$, for some $s, p\in\mathbb{Z}$. These conditions imply $\chi_1=\pm\frac{1}{5}+ s$ and $\chi_2=\pm\frac{2}{5}+ p$.
     \end{itemize}
     \textbf{Case 14:} There are two choices:
     \begin{itemize}
     \item The first choice is $\pm\rho=\frac{2}{5}+s$ and $\pm\tau=\frac{1}{3}+p$, for some $s, p\in\mathbb{Z}$. These conditions give $\chi_1=\pm\frac{2}{5}+ s$ and $\chi_2=\pm\frac{1}{3}+ p$.
     \item The second choice is $\pm\rho=\frac{1}{3}+s$ and $\pm\tau=\frac{2}{5}+p$, for some $s, p\in\mathbb{Z}$. These conditions imply $\chi_1=\pm\frac{1}{3}+ s$ and $\chi_2=\pm\frac{2}{5}+p$.
     \end{itemize}
\end{enumerate}
Thus, we can state the main theorem of this paper as follows:
\begin{theorem}[Main Theorem]\label{TH_MR}
Let $U, \Lambda_{1}$, and $\Lambda_{2}$ are  meromorphic functions of the state variable $\theta$,  and assume there is a point $c\in\mathbb{C}$ such that
\begin{equation}
    \label{eq:condition}
    U_c^\prime=\Lambda_{1c}^{\prime}=\Lambda_{2c}^{\prime}=0,\quad \text{with}\quad U_c\Lambda_{1c}\Lambda_{2c}\neq0.
\end{equation} If the Hamiltonian \eqref{Eq1}  is integrable in the Liouville sense, then the two numbers
\begin{equation}
\begin{split}
\chi_1&=\frac{1}{2n}\sqrt{(m+2)^2+8\left(\frac{U_c''}{U_c}-\frac{\Lambda_{1c}''}{\Lambda_{1c}}\right)},\\
\chi_2&=\frac{1}{2n}\sqrt{(m-n+2)^2+8\left(\frac{U_c''}{U_c}-\frac{\Lambda_{2c}''}{\Lambda_{2c}}\right)},\label{ICC}
\end{split}
\end{equation}
must belong to the integrability Table \ref{tablel:IT}.
\begin{table}[t]
\setlength{\tabcolsep}{7pt}
\renewcommand{\arraystretch}{1.3}
\par
\begin{center}
\begin{tabular}{ccc}
\hline
No. & $\chi_1$ & $\chi_2$ \\ \hline
1. & arbitrary & $\pm\chi_1+2p-\frac{1}{2}$, $\pm\chi_1+2p+\frac{1}{2}$,  $\frac{1}{2} +p$ \\
2. & $\frac{1}{2} +s$ & arbitrary \\
3. & $\pm\frac{1}{3}+ s$ & $\pm\frac{1}{3}+ r$, $\pm \frac{1}{4} +p$, $\pm\frac{1}{5}+r$, $\pm\frac{2}{5}+ p$. \\
4. & $\pm\frac{1}{4}+s$ & $\pm\frac{1}{3}+ r$ \\
5. & $\pm\frac{1}{5}+s$ & $\pm\frac{1}{3}+ p$, $\pm\frac{2}{5}+ r$ \\
6. & $\pm\frac{2}{5}+ s$ & $\pm\frac{1}{5}+ p$, $\pm\frac{1}{3}+r$ \\ \hline
\end{tabular}%
\caption{The integrability table. Here $s, p, r \in \mathbb{Z}$.} \label{tablel:IT}
\end{center}
\end{table}
\end{theorem}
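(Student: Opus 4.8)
The plan is to apply the Morales-Ramis obstruction (Theorem~1) along a conveniently chosen particular solution and to reduce the resulting normal variational equation to a Riemann $P$-equation whose solvability is decided by Kimura's theorem. First I would observe that the hypotheses \eqref{eq:condition} guarantee the two-dimensional invariant manifold $\mathcal{S}$ of \eqref{Eq4}: since $\Lambda_0'(\theta_0)=\Lambda_1'(\theta_0)=U'(\theta_0)=0$, the plane $\{\theta=\theta_0,\ p_\theta=0\}$ is invariant under the flow \eqref{Eq2}, and it is foliated by phase curves parameterized by the energy $h$. Any one of these curves is a non-equilibrium particular solution to which the Morales-Ramis criterion applies: Liouville integrability of \eqref{Eq1} forces the identity component of the differential Galois group of the variational equations \eqref{Eq6}, and hence of the closed normal subsystem \eqref{Eq8}, to be abelian and in particular solvable.

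The second step is to rationalize the normal variational equation. Writing \eqref{Eq8} as the scalar equation \eqref{Eq9}, I would change the independent variable to $\tau=r^n$ and specialize to the zero-energy curve $h=0$. Because Morales-Ramis only requires a single particular solution, this specialization is legitimate, and its payoff is decisive: for $h\neq0$ one of the singular points of the rationalized equation moves with the energy, whereas at $h=0$ the singular structure is fixed and can be treated uniformly in $m,n$. An affine rescaling $z=-\tfrac{\Lambda_1(\theta_0)}{\Lambda_0(\theta_0)}\tau$ then places the singularities at $\{0,1,\infty\}$ and yields exactly the Riemann $P$-equation \eqref{NVE}. Matching it against the canonical form \eqref{eq:Riemann_P_equation} and reading off the exponent differences as in \eqref{Eq12} gives $\rho=\chi_1$, $\sigma=\tfrac{3}{2}$, $\tau=\chi_2$, with $\chi_1,\chi_2$ the quantities \eqref{ICC}.

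The final step is to impose Kimura's criterion (Theorem~\ref{TH_Kimura}) on \eqref{NVE}. Condition~(A) holds precisely when one of the four numbers $\nu_1,\dots,\nu_4$ of \eqref{Eq13} is an odd integer, and solving these yields the $\chi_1$-arbitrary entries of Table~\ref{tablel:IT}. For condition~(B) the binding constraint is that the exponent difference $\sigma$ is already pinned at $\tfrac{3}{2}$; scanning the Schwarz list (Table~\ref{tablel:Schwarz}) shows that only the rows compatible with a half-integer difference, namely cases $1,2,4,6,9,14$, can occur. I would treat these six cases in turn, solve the prescribed rational conditions on the remaining differences $\rho=\chi_1$ and $\tau=\chi_2$, and assemble the admissible pairs $(\chi_1,\chi_2)$ into the remaining rows of Table~\ref{tablel:IT}.

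I expect the principal obstacle to be not the Galois-theoretic input nor any isolated computation, but the completeness and correctness of the case analysis. One must confirm that $\sigma=\tfrac{3}{2}$ is genuinely realizable in each retained Schwarz case while excluding all others, manage the $\pm$ signs and integer shifts so that no admissible pair is omitted or duplicated, and check that the union of all conditions coming from (A) and (B) is exactly the content of Table~\ref{tablel:IT}. A subtler point is that several Schwarz cases produce overlapping families (for example the half-integer conditions from Case~1 and from condition~(A)), so care is needed to present the combined result in the compact form of the table without redundancy.
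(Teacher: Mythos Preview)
Your proposal is correct and follows essentially the same route as the paper: the invariant manifold \eqref{Eq4}, the rationalization via $\tau=r^n$ at $h=0$ and the rescaling $z=-\tfrac{\Lambda_1(\theta_0)}{\Lambda_0(\theta_0)}\tau$ to obtain the Riemann $P$-equation \eqref{NVE} with exponent differences $(\chi_1,\tfrac{3}{2},\chi_2)$, and then Kimura's theorem with condition~(A) giving row~1 of Table~\ref{tablel:IT} and condition~(B) restricted to Schwarz cases $1,2,4,6,9,14$ producing the remaining rows. The paper carries out exactly this case-by-case analysis you outline, so there is nothing to add.
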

From the form of the integrability Table~\ref{tablel:IT}, we can deduce the following corollary.
\begin{corollary}\label{Prop.NOI}
Let us assume that the quantities
\begin{equation}
\label{eq:qq}
\frac{U_c''}{U_c}\in \mathbb{Q},\qquad \frac{\Lambda_{2c}''}{\Lambda_{2c}}\in \mathbb{Q},\qquad  \frac{\Lambda_{1c}''}{\Lambda_{1c}}\in \mathbb{Q}.
\end{equation}
If   $\chi_1, \,\chi_2$ defined in \eqref{ICC} are irrational, then the Hamiltonian \eqref{Eq1} is not integrable in the sense of Liouville.
\end{corollary}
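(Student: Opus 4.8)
The plan is to obtain the corollary as the contrapositive of Theorem~\ref{TH_MR}, combined with a short arithmetic inspection of the integrability Table~\ref{tablel:IT}. The crucial preliminary observation is that hypothesis~\eqref{eq:qq} forces the \emph{squares} of $\chi_1$ and $\chi_2$ to be rational, even though $\chi_1,\chi_2$ themselves need not be. Indeed, squaring the definitions~\eqref{ICC} yields
\begin{align*}
4n^2\chi_1^2&=(m+2)^2+8\left(\frac{U''(\theta_0)}{U(\theta_0)}-\frac{\Lambda_0''(\theta_0)}{\Lambda_0(\theta_0)}\right),\\
4n^2\chi_2^2&=(m-n+2)^2+8\left(\frac{U''(\theta_0)}{U(\theta_0)}-\frac{\Lambda_1''(\theta_0)}{\Lambda_1(\theta_0)}\right),
\end{align*}
and since $m,n\in\Z$ with $n\neq0$ and the three ratios in~\eqref{eq:qq} are rational, the right-hand sides lie in $\Q$; hence $\chi_1^2,\chi_2^2\in\Q$.

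I would then argue by contradiction. Suppose the Hamiltonian~\eqref{Eq1} is Liouville integrable while $\chi_1,\chi_2\notin\Q$. By Theorem~\ref{TH_MR} the pair $(\chi_1,\chi_2)$ must occupy one of the rows of Table~\ref{tablel:IT}. Rows~2--6 each prescribe a value of $\chi_1$ of the form $\tfrac12+s$, $\pm\tfrac13+s$, $\pm\tfrac14+s$, $\pm\tfrac15+s$ or $\pm\tfrac25+s$ with $s\in\Z$; every such value is rational, contradicting $\chi_1\notin\Q$. Thus only Row~1 can possibly hold.

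In Row~1 the parameter $\chi_1$ is unrestricted, while $\chi_2$ is constrained to be of the form $\tfrac12+p$, or $\pm\chi_1+c$ with $c\in\{2p-\tfrac12,\,2p+\tfrac12\}$, $p\in\Z$. If $\chi_2=\tfrac12+p$ then $\chi_2\in\Q$, contradicting $\chi_2\notin\Q$. In the remaining cases I would use the elementary fact that if $a^2,b^2\in\Q$ and $b=\pm a+c$ with $c\in\Q\setminus\{0\}$, then $a\in\Q$: squaring gives $b^2=a^2\pm2ac+c^2$, so $a=\pm(b^2-a^2-c^2)/(2c)\in\Q$. Because $c$ here is a half-integer with odd numerator it is never zero, so applying this with $a=\chi_1$, $b=\chi_2$ and the rationality $\chi_1^2,\chi_2^2\in\Q$ from the first step forces $\chi_1\in\Q$, a final contradiction.

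Having excluded every row, the assumption of integrability is impossible, so~\eqref{Eq1} is not integrable in the Liouville sense. The argument is essentially arithmetic and the single point needing care is the Row~1 analysis: one must exploit that only the \emph{squares} $\chi_1^2,\chi_2^2$ are known to be rational and combine this with the nonzero half-integer offset $c$ to deduce $\chi_1\in\Q$. Everything else follows immediately from the table.
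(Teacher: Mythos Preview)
Your proof is correct and follows essentially the same approach as the paper: both arguments use the rationality of $\chi_1^2,\chi_2^2$ (the paper writes $\chi_i=\sqrt{r}/(2n),\sqrt{s}/(2n)$ with $r,s\in\Q$), reduce to the Row~1 relation $\chi_2=\pm\chi_1+c$ with a nonzero rational offset, and then square to force $\chi_1\in\Q$. Your version is slightly more explicit in dispatching Rows~2--6 and the sub-case $\chi_2=\tfrac12+p$, while the paper absorbs the factor $2n$ to make the offset an element of $\Z^*$; the substance is identical.
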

\begin{proof}
Suppose quantities~\eqref{ICC} have the form
\begin{equation}
\label{eq:chi}
\chi_1=\frac{\sqrt{r}}{2n},\qquad \chi_2=\frac{\sqrt{s}}{2n}, \qquad r,s\in \mathbb{Q},
\end{equation} 
and they are irrational.
If system~\eqref{Eq1} is integrable, then according to  Theorem~\ref{TH_MR} and  integrability Table~\ref{tablel:IT}, one of two possibilities
\begin{equation}
\label{eq:sa}
\sqrt{s}=k+ \sqrt{r},\quad \text{or}\quad \sqrt{s}=k- \sqrt{r},\qquad k\in \mathbb{Z}^*,
 \end{equation} 
  holds true. However, squaring~\eqref{eq:sa}, we get
\begin{equation}
\label{eq:ss}
s=k^2+2k\sqrt{r}+ r, \quad \text{or} \quad  s=k^2-2k\sqrt{r}+ r,
\end{equation} 
which is the contradiction because $r,s\in \mathbb{Q}$, while the right-hand sides of~\eqref{eq:ss} are irrational.  \end{proof}

\section{Applications of Theorem~\ref{TH_MR}}
In this section, we give some examples to clarify the applicability of the obtained result. The first example illustrates that if the Main Theorem~\ref{TH_MR} is not satisfied for a given problem, then this problem is not integrable.  For these models, the metrics have nonzero curvature.
\subsection{Example 1}
As the first example, we consider the integrability of the following  Hamiltonian
\begin{equation}
H=\frac{1}{2}\left[r^n\cos\left(\frac{\theta}{3}\right)+\cos\left(\frac{\theta}{4}\right)\right]\left(p_r^2+\frac{p_{\theta}^2}{r^2}\right)-
r^m\cos\theta,\label{EXA1}
\end{equation}
where $n$ and $m$ are integers. In physics, the Hamiltonian~\eqref{EXA1} describes the motion of a particle due to the influence of potential forces derived from the potential function $V(r,\theta)=-r^m\cos\theta$ on a surface with a non-constant Gaussian curvature.
To gain a quick insight into the dynamics of this model, we computed Poincar\'e cross sections for two pairs of exemplary parameter values, $m$ and $n$. As we can observe, the complex dynamics of the system depicted in Fig.~\ref{fig:ciecia2} suggest its non-integrability. However, these plots were generated for only certain parameter values. For other sets, the dynamics can vary significantly, and in particular, the system may be integrable. Thus, the natural question is whether system~\eqref{EXA1} is integrable for specific values of parameters $m,n$. The answer is no. This example will illustrate the application of the derived integrability obstructions.

\begin{figure}[htp]
	\centering
    \includegraphics[width=0.4\textwidth]{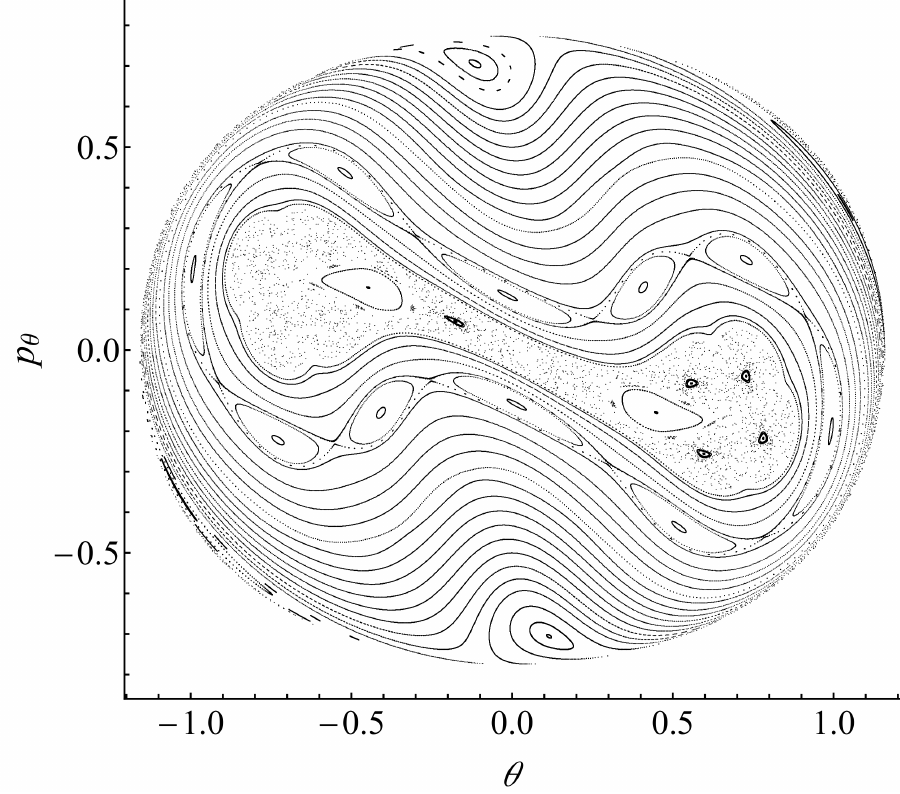} \hspace{1cm}
    \includegraphics[width=0.4\textwidth]{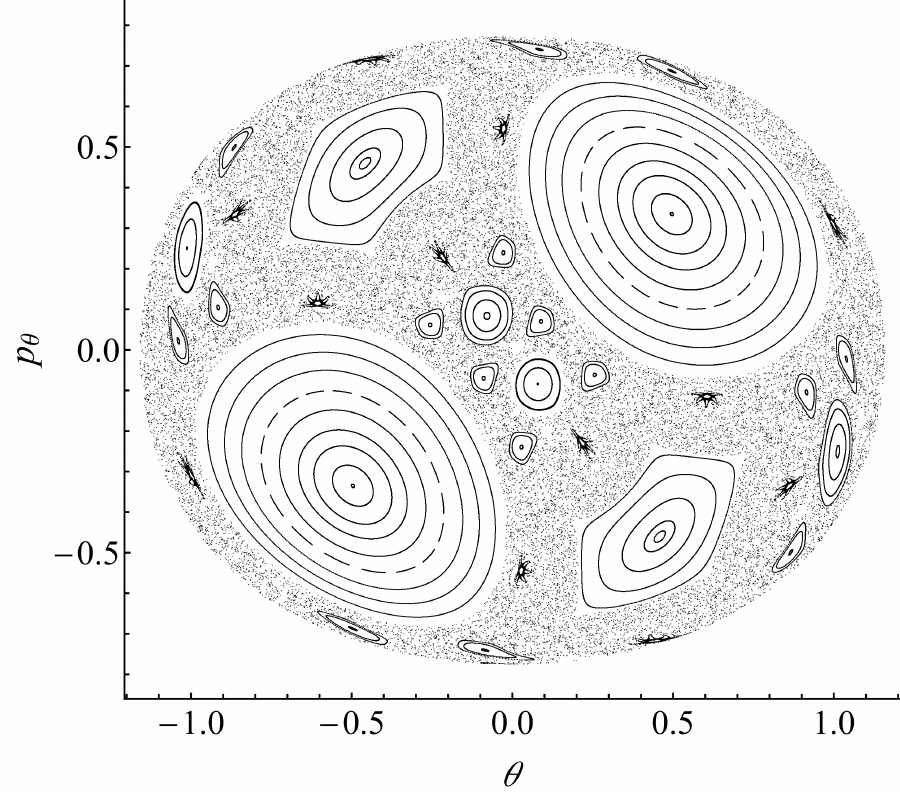}
	\caption{Poincar\'e sections of system~\eqref{EXA1} on surface $r=1$ with $p_r>0$. Top: $m=-1, n=-3$, at the level $E=-0.4$; Bottom: $m=-2, n=-6$, at the level $E=-0.4$ \label{fig:ciecia2}}
\end{figure}

To examine the integrability of the Hamiltonian \eqref{EXA1}, we identify the two Hamiltonians \eqref{Eq1} and~\eqref{EXA1}. We get
\begin{equation*}
\Lambda_1(\theta)=\cos\left(\frac{\theta}{3}\right),\quad \Lambda_0(\theta)=\cos\left(\frac{\theta}{4}\right),\quad U(\theta)=-\cos\theta.\label{EXA3}
\end{equation*}
We take a point, $c=0$, at which the condition~\eqref{eq:condition} is satisfied. At this point,  the integrability coefficients $\chi_1$ and $\chi_2$ are given by
\begin{equation*}
\chi_1=\frac{\sqrt{(m+n)^2-\frac{15}{2}}}{2n},\quad \chi_2=\frac{\sqrt{(m-n+2)^2-\frac{64}{9}}}{2n}.\label{EXA4}
\end{equation*}
It is clear that the two numbers $\chi_1$ and $ \chi_2$ are both irrational for any $m,n\in \mathbb{Z}, n \neq 0$. Therefore, according to Theorem~\ref{TH_MR} and Corollary~\ref{Prop.NOI}, the Hamiltonian~\eqref{EXA1} is not integrable in the class of functions meromorphic in the coordinates $(r,\theta)$ and the momenta $(p_r,p_\theta)$ for all values of the parameters $m,n$, provided $n\neq 0$.

\subsection{Example 2}  As the second example, let us study the following  Hamiltonian system
\begin{equation}
\label{eq:example 3}
    H=\frac{1}{2}\left(\alpha_1\frac{1}{r^2} U(\theta)+\alpha_0\right)\left(p_r^2+\frac{1}{r^2}p_\theta^2\right)+\frac{1}{r^2}U(\theta),
\end{equation}
with
\begin{equation}
\label{eq:uu}
U(\theta)=\frac{1}{\mu_1+\cos(2\theta)+\mu_2\sin(2\theta)},
\end{equation}
where $\alpha_1,\alpha_0,\mu_1,\mu_2$ are arbitrary constants. 
We choose point $c=-\frac{1}{2}\mathrm{i}\ln\left[(-1-\mathrm{i}\mu)/(\sqrt{1+\mu_2^2})\right]$ at which condition~\eqref{eq:condition} is fulfilled.  Taking this into account, we have the following
\begin{equation*}
    n=m=-2,\quad \chi_1=\frac{\sqrt{2}}{\sqrt{1-\mu_1(1+\mu_2^2)^{-\frac{1}{2}}}},\quad \chi_2=-\frac{1}{2}.
\end{equation*}
From the form of the integrability coefficient $\chi_2$, we immediately see that values $\chi_1,\chi_2$  belong to the first item of the integrability Table~\ref{tablel:IT}. Thus, the necessary integrability condition is satisfied for arbitrary values of the remaining constants $\alpha_1,\alpha_0,\mu_1,\mu_2$.  Indeed, the Hamiltonian system defined by Hamiltonian~\eqref{eq:example 3} is integrable with a first integral quadratic in the momenta. Its form is as follows
\begin{equation}
\begin{split}
 I=& \frac{\alpha_1}{r^2 U^3}\left[(3U'^2-2U U'')p_\theta^2-r^2U'^2p_r^2-4rUU'p_\theta p_r\right]+\\ &\frac{\alpha_0}{U^4}\left(4U^2+3U'^2-2UU''\right)p_\theta^2-2\frac{4U^2+U'^2}{U^3}.
 \end{split}
\end{equation}
This integrable case is new. Notice, that if $\alpha_1=0$, so $\Lambda_1$ defined in \eqref{Eq1} vanishes,  the system~\eqref{eq:example 3} becomes separable.

\subsection{Example 3} Finally, we examine the integrability of the more complicated Hamiltonian
\begin{equation}
\begin{split}
H=&\frac{1}{2}\left(\frac{4}{r^4(3+\cos 4\theta)}+\sin\theta\cos\theta\right)\left(p_r^2+\frac{1}{r^2}p_\theta^2\right)\\ &-\frac{1}{8} r^4(3+\cos 4\theta)\sin 2\theta,\label{EX41}
\end{split}
\end{equation}
Identifying two Hamiltonian functions \eqref{Eq1} and \eqref{EX41}, we obtain
\begin{equation}
\begin{split}
& m=4,\quad n=-4,  \quad \Lambda_1=\sin\theta\cos\theta, \\ &\Lambda_2=\frac{4}{3+\cos 4\theta},\quad U=-\frac{1}{8}(3+\cos 4\theta)\sin 2\theta.\label{EX42}
\end{split}
\end{equation}
Now, we will apply the main integrability Theorem~\ref{TH_MR}. The condition \eqref{eq:condition} is satisfied when $c=\pi/4$, and the two numbers \eqref{ICC} become
$
    \chi_1=-5/4,\ \chi_2=-7/4.\label{EX43}
$
The quantities $\chi_1, \chi_2$ belong to the first row in the integrability Table \ref{tablel:IT}. Thus, the Hamiltonian \eqref{EX41} is suspected to be integrable. To confirm the integrability, we should present the complementary integral. Indeed, the system is integrable and the seeking first integral is a quartic polynomial in the momenta
\begin{equation}\label{eq:I}
\begin{split}
&I=\cos4\theta p_r^4-\frac{\sin4\theta}{r}p_r^3p_\theta -  \frac{r^4}{4}\left[\cos8\theta+10\cos4\theta+5\right]p_r^2\\ &+ \frac{r^3}{8}\left[\sin8\theta+14\sin4\theta\right]p_\theta p_r+\frac{r^8}{64}\left[15\cos2\theta+\cos6\theta\right]^2.
\end{split}
\end{equation}
This integrable system is new. The Lie derivative of the integral $I$ along the vector field associated with Hamiltonian \eqref{EX41}, is given by
\begin{equation}\label{eq:ID}
\begin{split}
    \{I,H\}&=\left(h-1\right)W,
    \end{split}
\end{equation}
where $W$ is a non-zero, real function of the state variables.
As outlined by equation~\eqref{eq:ID}, the complementary integral~\eqref{eq:I} is valid only on a fixed energy level, i.e., when  $h = 1$. Consequently, this problem is conditionally integrable. In physics, the Hamiltonian~\eqref{EX41} describes the motion of a particle under the influence of potential forces derived from the potential function  $V(r,\theta) = -\frac{1}{8} r^4 (3 + \cos 4\theta) \sin 2\theta$. This motion occurs on a surface with variable Gaussian curvature, which is determined by substituting~\eqref{EX42} into equation \eqref{Gauss}.

\section{Conclusions}
This paper presents a study of the integrability of certain types of Hamiltonian systems in curved spaces. Thanks to the existence of particular solutions, we were able, with the help of Morales-Ramis theory, to derive necessary integrability conditions. These conditions are expressed as arithmetic constraints on the values of parameters describing the system. It has been demonstrated that the obtained integrability conditions are both effective and straightforward to apply. They were employed in three non-trivial forms of potential and metric, leading to the discovery of certain new integrable cases.

Thus, the conclusions of this paper confirm that Morales-Ramis theory is an effective tool for analyzing the integrability of Hamiltonian systems in curved spaces, with the potential to uncover new and significant integrable cases. These findings expand our understanding of nonlinear dynamics and may have applications in various fields of physics and mathematics. While the current study focused on $2D$ systems, the approach could be extended to higher-dimensional systems, albeit with increased complexity. Our future research will focus on analyzing the integrability of higher-dimensional Hamiltonian systems in curved spaces, which is currently attracting significant scientific interest.

\acknowledgments
This research was funded by The
	National Science Center of Poland under Grant No.
	2020/39/D/ST1/01632. 
 For Open
	 Access, the authors have applied a CC-BY public
	copyright license to any Author Accepted Manuscript~(AAM) version arising from this submission.

\end{document}